%% file: local-homophily-transformation.tex
\documentclass[runningheads]{llncs}
\usepackage[T1]{fontenc}
\usepackage{graphicx}
\input{preamble}
\begin{document}
\title{Local Homophily on Bicolored Graphs is $\Poly$-complete}
\author{Pablo Concha-Vega}
\institute{
	Aix Marseille Univ, CNRS, LIS, Marseille, France\\
	\email{pablo.concha-vega@lis-lab.fr}
}

\maketitle              

\input{content}

\bibliographystyle{splncs04}
\bibliography{refs}

\input{appendix}
\end{document}

%% file: preamble.tex
\usepackage{amsmath}
\usepackage{xspace}
\usepackage{amssymb}
\usepackage{tikz}
\usepackage{subcaption}
\usepackage{hyperref}
\graphicspath{{figures/}}

\newcommand{\Poly}{{\mathbf{P}}}

\newcommand{\LS}{{\mathbf{LOGSPACE}}}

\newcommand{\LHE}{\textsf{LHE}\xspace}
\newcommand{\CVP}{\textsf{CVP}\xspace}

\newcommand{\N}{\mathbb{N}\xspace}

\DeclareMathOperator{\level}{\mathsf{level}}

%% file: content.tex
\begin{abstract}
	We propose a local transformation on bicolored graphs, which we call local homophily,
	inspired by adaptive networks and based on majority dynamics and homophily.
	In this transformation, a vertex updates its color to match the majority of its neighbors,
	while neighbors of the same color become connected and neighbors of the opposite color
	become disconnected.

	We show how to simulate Boolean circuits using local homophily
	and establish that determining whether a given pair of vertices becomes connected
	under iterative applications of local homophily is $\Poly$-complete under
	logspace reductions.

	\keywords{Adaptive Networks \and Computational Complexity \and Majority Dynamics \and Homophily}
\end{abstract}

\section{Introduction}

Adaptive networks are systems in which the colors of vertices and the network topology
coevolve over time, with vertex colors and edges updated according to rules that depend
on the current network configuration \cite{gross08,sayama13}.
A central example is consensus formation via majority dynamics, where each vertex updates
its color according to the majority of its neighbors \cite{kozma08,nguyen20,cruciani21}.
Homophily, the tendency of vertices (agents) with similar colors (opinions) to preferentially connect,
often guides the rewiring of edges \cite{mcpherson01,talaga20,loveland25}.

In this work, we study a concrete local transformation on bicolored graphs. Given a vertex 
$v$, the transformation proceeds as follows:
$v$ changes its color to match the majority color of its neighbors; neighbors of 
$v$ that have the same color become connected, while neighbors of the opposite color
become disconnected.

From a computational perspective, a central task in combinatorics is to ``test'' or ``probe''
specific combinations of elements within a combinatorial object. Motivated by this,
we consider the following problem: given a bicolored graph and a pair of special vertices,
does iteratively applying the vertex transformation eventually connect the special pair?

We show that this problem is $\Poly$-complete under logspace reductions via a reduction
from the Circuit Value Problem (CVP). A key structural lemma describing the evolution of the
graph under the transformation underlies the correctness of the reduction and the proof of
the main theorem.
Thus, while adaptive networks are often studied from a dynamical or statistical perspective,
we investigate them from the viewpoint of computational complexity.

\begin{figure}[t]
	\centering
	\includegraphics[width=4cm]{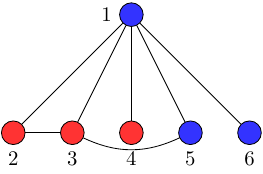}
	\qquad
	\includegraphics[width=4cm]{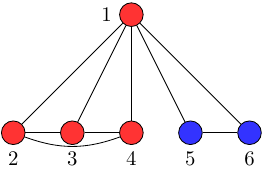}
	\caption{
		Example of local homophily.
		{\bf Left:} the original bicolored graph $B$.
		{\bf Right:} the bicolored graph after applying the transformation at vertex $1$, i.e., $\Phi_1(B)$.
	}\label{fig:example}
\end{figure}

\section{Preliminaries}

We consider finite, simple graphs. Let $G$ be a graph. The set of vertices
of $G$ is denoted $V(G)$, and the set of edges by $E(G) \subseteq V(G) \times V(G)$
($\subseteq {V(G) \choose 2}$ for undirected graphs). When $G$ is undirected,
the neighborhood of a vertex $v$ is denoted by $N_G(v)$.
Given a set $X \in V(G)$, $G[X]$ denotes the subgraph induced by $X$.
For a finite set $X$, we denote by $X^*$ the set of all finite words over $X$,
including the empty word $\varepsilon$. We use the terms \textit{sequence} and
\textit{word} interchangeably. A word $w \in X^*$ of length $n$ is written as
$w = (w_1, w_2, \dots, w_n)$, where $w_i \in X$, or simply as $w_1 w_2 \dots w_n$
when the meaning is clear from the context.
We denote by $[n]$ the set of the first $n$
positive integers, i.e., $[n] = \{1,\dots,n\}$ and $[k,n] = \{k, \dots, n\}$.
A \emph{bicolored graph} is a pair $B=(G, c)$, where $G$ is an undirected graph and
$c: V(G) \to \{-1,+1\}$ is a binary coloring function.

\paragraph{Local Homophily Transformation}
(Fig.~\ref{fig:example}).
Let $B=(G,c)$ be a bicolored graph. For $v \in V(G)$,
the \emph{local homophily transformation} at $v$ is
\[
\Phi_v(B) = (G',c'),
\]
where $c'(u)=c(u)$ for all $u\neq v$, and
\[
c'(v) :=
\begin{cases}
\mathrm{sign}\!\left(\sum_{u \in N(v)} c(u)\right),
& \text{if } \sum_{u \in N(v)} c(u) \neq 0,\\
c(v), & \text{otherwise.}
\end{cases}
\]
The edge set of $G'$ is obtained from $G$ by replacing the
subgraph induced by $N(v)$ with a graph in which
$xy \in E(G')$ if and only if $c(x)=c(y)$ for all distinct
$x,y \in N(v)$. All other edges remain unchanged.

For a word $w = v_1 v_2 \dots v_k \in V(G)^*$, we define
\[
\Phi_w := \Phi_{v_k} \circ \dots \circ \Phi_{v_1}
\quad\text{and}\quad
\Phi_\varepsilon := \mathrm{id}.
\]
Accordingly, $\Phi_w(B)$ denotes the bicolored graph obtained by
applying the local homophily transformation to the vertices of $w$
from left to right.

Given a bicolored graph $B$ and a sequence of vertices $w \in V(G)^*$,
the transformation $\Phi_w(B)$ completely determines the colors of the vertices
and the edges among neighbors affected by the sequence.
A natural computational question is whether two given vertices become adjacent
after applying such a sequence. This leads to the following decision problem:

\paragraph{Local Homophily Evaluation}

(\LHE).
\smallskip

\noindent
\textbf{Input:} A bicolored graph $B=(G,c)$, two vertices $s,t \in V(G)$,
and a word $w \in V(G)^*$.
\smallskip

\noindent
\textbf{Question:} Is $st \in E(G_w)$, where $\Phi_w(B)=(G_w,c_w)$?

\subsection{$\Poly$-completeness Theory}

In order to contextualize the results, we give a brief introduction to
$\Poly$-completeness theory and its main concepts.

\paragraph{Boolean circuits.}

A \emph{Boolean circuit} is a directed acyclic graph whose vertices
(called \emph{gates}) are either inputs, constant values, or logical operations
(such as AND, OR, NOT). A designated vertex is the \emph{output gate}.
Each gate computes a Boolean value from its predecessors according to its operation.
A Boolean circuit is \emph{monotone} if it contains only AND and
OR gates, i.e., it does not have any NOT gates.

\paragraph{Complexity classes.}
A problem is in $\Poly$ if there exists a deterministic Turing machine
that decides it in time polynomial in the size of the input.
A function $f$ is \emph{logspace computable} if there exists a deterministic Turing machine
with a read-only input tape, a write-only output tape, and a read-write work tape,
such that on every input $x$ it outputs $f(x)$ while using at most $O(\log |x|)$ cells on the work tape.
The space used on the input and output tapes is not counted.



\paragraph{Hardness and Completeness.}
Let $\mathbf{C}$ be a class of decision problems, and let $R$ be a type of reduction
(such as logspace reductions).
A problem $A$ is said to be \emph{$\mathbf{C}$-hard} under $R$ if every problem in
$\mathbf{C}$ can be reduced to $A$ using a reduction of type $R$.

If $A$ is both $\mathbf{C}$-hard under $R$ and belongs to $\mathbf{C}$,
then $A$ is called \emph{$\mathbf{C}$-complete} under $R$.

\paragraph{Circuit Value Problem (\CVP).}
The \emph{Circuit Value Problem} is as follows: given a Boolean circuit
and an assignment of its input values, decide whether the output gate
evaluates to TRUE. \CVP is known to be $\Poly$-complete under logspace reductions~\cite{ladner75}.
Moreover, several restricted versions of \CVP remain $\Poly$-complete, including:

\begin{itemize}
	\item the \emph{monotone \CVP}, where only monotone gates (AND and OR) are allowed;
	\item the \emph{synchronous \CVP}, where a circuit is said to be \emph{synchronous} if,
	for every gate $v$, all its input vertices have $\level(u) = \level(v)-1$,
	with $\level(v)$ defined as the length of the longest directed path from an input
	vertex to $v$;
	\item versions with fan-in and fan-out at most $2$.
\end{itemize}

Furthermore, the version of \CVP that combines all of these restrictions
is also $\Poly$-complete~\cite{greenlaw95}.
Although these restricted versions have their own names in the literature,
from now on we will simply refer to \CVP as the version that is monotone,
synchronous, and has fan-in and fan-out at most $2$.

\section{Hardness of \LHE}

In this section, we prove the main result of this article:

\begin{theorem}\label{thm:main}
	\LHE is $\Poly$-complete under logspace reductions.
\end{theorem}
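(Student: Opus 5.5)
Membership in $\Poly$ is immediate. We simulate $\Phi_w$ step by step. Each application of $\Phi_v$ needs $O(|V(G)|^2)$ time: we compute the sign of $\sum_{u\in N(v)}c(u)$, then rewrite the induced subgraph on $N(v)$ according to the colors. Doing this $|w|$ times and then checking whether $st\in E(G_w)$ takes polynomial time in the input size.

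For hardness we give a logspace reduction from \CVP, in its monotone, synchronous, fan-in/fan-out $\le 2$ form. The key observation is that applying $\Phi_v$ acts as a local wiring device. Two neighbors $x,y$ of $v$ become adjacent exactly when $c(x)=c(y)$. So we encode truth values in colors ($+1$ for TRUE, $-1$ for FALSE) and use edges to carry information forward.

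The gadgets are built level by level.
\begin{itemize}
\item For each gate $g$ we create a vertex $x_g$ whose color will equal the value of $g$ at the moment it is ``read''.
\item For an AND or OR gate $g$ with inputs $a,b$ we add a fresh vertex $m_g$ adjacent to $x_a$, $x_b$ and a small number of constant-colored padding vertices. With one padding vertex of color $-1$, the majority among $\{x_a,x_b,-1\}$ computes AND. With one padding vertex of color $+1$ it computes OR. This assumes that at the time $m_g$ is updated its neighborhood is exactly as designed.
\item A first application of $\Phi_{m_g}$ sets $c(m_g)$ to the gate value.
\item Its side effect is to rewire the neighborhood of $m_g$. Stale edges between gadgets must be controlled, which is why fresh copy vertices will be needed.
\item To transmit $c(m_g)$ forward, we attach $m_g$ to a ``reference'' vertex $r$ of color $+1$ and to the output copies. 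Applying $\Phi_v$ at a vertex $v$ adjacent to both $m_g$ and $r$ creates the edge $m_g r$ iff $c(m_g)=+1$.
\item Because the circuit is synchronous with fan-out $\le 2$, we can duplicate values into two fresh copy vertices per gate using the same mechanism: a copy vertex whose only neighbors are $m_g$ and constants of balanced color takes the color of $m_g$ when updated.
\end{itemize}
At the output gate $o$, we set $s=r$ (a $+1$ reference) and $t=m_o$, and finish with an update at a vertex adjacent to both. The final edge $st$ then exists iff $c(m_o)=+1$, i.e.\ iff the circuit outputs TRUE. The word $w$ lists the gadget vertices level by level, in a fixed order within each gadget.

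The main obstacle is the structural lemma: proving that, under this word $w$, every update happens in exactly the neighborhood intended, so that the majority computes the right function and the edge created at the end reflects the output value. The difficulty is that each $\Phi_v$ rewires all pairs in $N(v)$. This may add edges from earlier gadgets to vertices that are updated later, or delete edges that later gadgets rely on. We would handle it by an invariant proved by induction on the level.
\begin{itemize}
\item After processing level $\ell$, the vertices of levels $>\ell$ still have their original neighborhoods.
\item Each $m_g$ at level $\ell$ has the correct color.
\item Any new edges created lie only among already-processed vertices, which are never updated again.
\end{itemize}
To make the invariant hold we make every vertex in $w$ appear once, use fresh padding and copy vertices per gadget, and check that processed vertices never reappear in a later neighborhood except as intended inputs. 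Finally, the construction is logspace computable. Each gadget depends only on a gate and its inputs, so a machine can enumerate the gates with $O(\log n)$-bit counters and output the vertices, edges, colors and word.
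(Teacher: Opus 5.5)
The membership argument and the overall plan are fine: reduce from monotone synchronous \CVP, concatenate the gadget words level by level, and generate everything with $O(\log n)$-bit counters. The hardness part, however, has a genuine gap. You never fix a complete, consistent gadget, and you defer correctness to an invariant that cannot hold for what you sketch.

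Under $\Phi_v$ an edge $xy$ can change only if $x,y\in N(v)$, and it is created only if $c(x)=c(y)$. So for $m_{g'}$ to see the copy vertex $x_a$ when $m_{g'}$ is updated, one of two things must happen:
\begin{itemize}
\item $m_{g'}\in N(x_a)$ already when $x_a$ is updated; or
\item the edge $x_am_{g'}$ is created afterwards through a common neighbor. Then it exists only if $c(x_a)=c(m_{g'})$, so the value is carried by an edge (the paper's encoding, not yours).
\end{itemize}
Your design needs the first option, which contradicts ``a copy vertex whose only neighbors are $m_g$ and constants''. In that case $m_{g'}$ enters the majority at $x_a$. Moreover, $\Phi_{x_a}$ joins $m_{g'}$ to every vertex of $N(x_a)$ of the same color as $m_{g'}$, including $m_a$ exactly when $c(m_a)=c(m_{g'})$. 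Hence neighborhoods of level-$(\ell+1)$ vertices necessarily change, in a value-dependent way, and your invariant ``vertices of levels $>\ell$ keep their original neighborhoods'' is false. Likewise, attaching $m_g$ to $r$ and to its copies before $\Phi_{m_g}$ contradicts $N(m_g)=\{x_a,x_b,p\}$, while attaching them afterwards is again only possible conditionally.

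These extra edges actually break the gates. Take the following configuration:
\begin{itemize}
\item $a,b$ are input gates with $c(m_a)=+1$ and $c(m_b)=-1$;
\item $g'$ is an OR gate whose vertex $m_{g'}$ has initial color $-1$ and one $+1$ padding vertex;
\item each copy $x_a,x_b$ has initial color $+1$ and neighbors $m_a$ (resp.\ $m_b$), $m_{g'}$, one $+1$ constant and one $-1$ constant.
\end{itemize}
Both copies end up with the right color, although at $x_a$ the sum is $0$, so $x_a$ is correct only because of its initial color. But $\Phi_{x_a}$ joins $m_{g'}$ to the $-1$ constant of $x_a$, and $\Phi_{x_b}$ joins it to $m_b$ and to the $-1$ constant of $x_b$. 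At its update $m_{g'}$ therefore sees $+1,-1,+1$ plus three extra $-1$'s, and OR(TRUE, FALSE) evaluates to FALSE. With initial color $+1$ for $m_{g'}$ (and $-1$ for the copies), the symmetric effect makes OR(FALSE, FALSE) evaluate to TRUE.

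A repair would need a genuine design principle together with a complete check of every update. For instance, you could choose initial colors so that every imported neighbor pushes the majority in the already-correct direction and no constant of the wrong color is ever imported. Controlling exactly this accumulation of edges is the substance of the paper's argument: explicit OR/AND gadgets checked case by case, the flower lemma for the duplicator, and a flower size $k$ exceeding the number of $-1$ neighbors that can accumulate. It is the part your proposal leaves open.
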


To prove this, we give a logspace reduction from the \CVP.
Intuitively, we encode the evaluation of a Boolean circuit using
the local homophily transformation, so that the output gate evaluates
to TRUE if and only if a specific edge appears in the final graph.
Note that \LHE is trivially in $\Poly$, as it can be solved by sequentially
applying the transformations.

\paragraph{Outline of the proof.}
The proof proceeds along the following steps:
\begin{itemize}
	\item We first prove a structural lemma that will be used in the reduction.
	\item We then present the gadgets corresponding to circuit elements: AND, OR,
		signal duplicator.
	\item Next, we describe how to connect these gadgets to encode an entire
		Boolean circuit, and argue why the sequence of transformations propagates
		the computation correctly.
	\item Finally, we discuss why the construction can be performed in logspace.
\end{itemize}

\subsection{Flower Graphs}

\begin{definition}[Flower graph (Fig.\ref{fig:flower})]
We define a \emph{flower graph} $F_{n,m}$ as follows:
start with a clique $K_n$ (the \emph{center}) with vertices colored $+1$, 
and for each vertex $v \in K_n$, attach a disjoint clique $K_m^{(v)}$ (the \emph{petals}) 
colored $-1$, connecting $v$ to every vertex of $K_m^{(v)}$. No other edges are added.
\end{definition}

\begin{figure}[t]
	\centering
	\includegraphics[width=4cm]{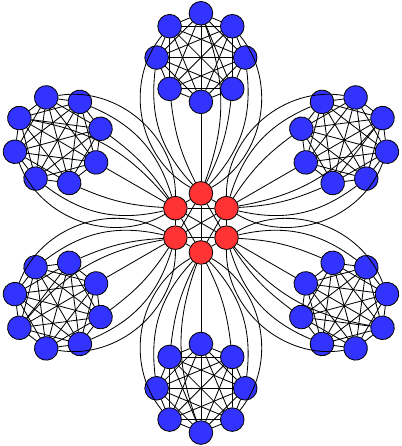}
	\qquad
	\includegraphics[width=4cm]{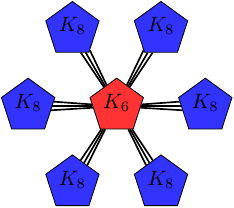}
	\caption{
		Example of a flower graph.
		{\bf Left:} the flower graph $F_{6,8}$.
		{\bf Right:} a succinct representation of $F_{6,8}$.
	}\label{fig:flower}
\end{figure}

\begin{lemma}\label{lemma:flower}
	Let $n,m \in \N$ with $m \ge n$, and let $F_{n,m}$ be a flower graph with central
	vertices $[n]$. For any integer $0 < k \leq n$, the following hold for
	$\Phi_{1 \cdots k}(F_{n,m})$:
	\begin{enumerate}
		\item Each central vertex $i \leq k$ has its color flipped.
		\item For each $i<k$, vertex $i$ is connected to all the vertices of the petal of vertex $i+1$.
		\item The subgraph induced by $[k]$ is the path $P_k$.
		\item For all $i<k$, vertex $i$ has no edges to any central vertex $> k$.
		\item The subgraph induced by $[k,n]$ is $K_{n-k+1}$.
	\end{enumerate}
\end{lemma}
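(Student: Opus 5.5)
The proof is by induction on $k$. The five items alone are not a strong enough hypothesis, so I would carry a stronger invariant $(\ast_k)$ that describes the neighbourhoods exactly. For $\Phi_{1\cdots k}(F_{n,m})$, with $P_j$ denoting the petal $K_m^{(j)}$, $(\ast_k)$ states:

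(a) every central vertex $i\le k$ has colour $-1$, and all other vertices keep their original colours;

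(b) for every central $j>k$, $N(j)=([k,n]\setminus\{j\})\cup P_j$;

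(c) each petal $P_j$ with $j>k$ is still a clique, and its only outside neighbours are $j$ and, when $j=k+1$, possibly $k$;

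(d) $N(k)=\{k-1\}\cup[k+1,n]\cup P_k$ (dropping $k-1$ when $k=1$);

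(e) for $i<k$, $N(i)=\{i-1,i+1\}\cup P_i\cup P_{i+1}$ (dropping $i-1$ when $i=1$).

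Items 1–5 of Lemma~\ref{lemma:flower} are read off directly from $(\ast_k)$:

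item 1 is (a);

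item 2 is the $P_{i+1}$ part of (e);

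item 3 follows from the central parts of (d) and (e);

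item 4 follows from (e);

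item 5 follows from (b) together with (d).

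\textbf{Base case $k=1$.} We have $N(1)=[2,n]\cup P_1$. The colour sum is $(n-1)-m<0$, because $m\ge n$, so vertex $1$ flips to $-1$. The rewiring of $N(1)$ keeps $[2,n]$ a clique and keeps $P_1$ a clique. It deletes edges between $[2,n]$ and $P_1$, but there were none. So only the colour changes, and $(\ast_1)$ holds.

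\textbf{Inductive step $k\to k+1$, with $k<n$.} By (b), $N(k+1)=\{k\}\cup[k+2,n]\cup P_{k+1}$. No earlier central vertex $i<k$ and no foreign petal appears in this set; this is exactly why (b) and (c) are included in the invariant.

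The colour sum is $-1+(n-k-1)-m<0$, so $k+1$ flips to $-1$.

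The rewiring acts only inside $N(k+1)$, where the colours are: $k$ has $-1$, $P_{k+1}$ has $-1$, and $[k+2,n]$ has $+1$. Its effects are:

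$k$ becomes adjacent to all of $P_{k+1}$;

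$k$ loses its edges to $[k+2,n]$;

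$[k+2,n]$ remains a clique and $P_{k+1}$ remains a clique;

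no edge is created between $[k+2,n]$ and $P_{k+1}$.

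Edges incident to $k+1$ itself are untouched, since $k+1\notin N(k+1)$.

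With these changes, the new $N(k)$ is $\{k-1,k+1\}\cup P_k\cup P_{k+1}$, which is (e) for $i=k$. The new $N(k+1)$ is $\{k\}\cup[k+2,n]\cup P_{k+1}$, which is (d). For $j\ge k+2$, vertex $j$ has lost only its neighbour $k$, which gives (b) at $k+1$. Vertices outside $N(k+1)\cup\{k+1\}$ are unaffected, which gives (c) and the remaining cases of (e).

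\textbf{Main obstacle.} The difficulty is not any single computation but choosing the invariant correctly. The rewiring at $k+1$ is only as predicted if $N(k+1)$ is exactly $\{k\}\cup[k+2,n]\cup P_{k+1}$. In particular, the earlier vertices $1,\dots,k-1$ and the petals $P_1,\dots,P_k$ must not be adjacent to $k+1$, since they would change both the majority count and the rewiring. Clauses (b) and (c) exist precisely to carry this information through the induction. Once they are stated, every step is a direct check, and the hypothesis $m\ge n$ is used only to make every colour sum negative.
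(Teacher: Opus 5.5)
Your proof is correct and follows the same induction on $k$ as the paper, with the same key step: computing the effect of $\Phi_{k+1}$ on $N(k+1)=\{k\}\cup[k+2,n]\cup P_{k+1}$, where $\{k\}\cup P_{k+1}$ has colour $-1$ and $[k+2,n]$ has colour $+1$. Your strengthened invariant $(\ast_k)$ makes explicit what the paper's inductive step assumes but does not derive from items 1--5 alone, namely that $k+1$ has no neighbours besides $k$, $[k+2,n]$ and its own petal; and your count of $n-k-1$ neighbours of colour $+1$ is right, whereas the paper's $n-k+1$ is off by two, though this does not affect the sign argument.
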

\begin{proof}
	We proceed by induction on $k$.

	Vertex $1$ flips its color when applying $\Phi_1$,
	because in the central clique it has $n-1$ neighbors of its initial color $+1$,
	while in its petal it is connected to $m \ge n$ vertices of the opposite color $-1$.
	Therefore, the majority color in its neighborhood is $-1$, and the flip occurs.

	Conditions 2-5 are easily verified for $k=1$:
	\begin{itemize}
		\item Condition 2 is vacuously satisfied since there are no vertices $i<1$.
		\item Condition 3: the subgraph induced by $\{1\}$ is trivially $P_1$.
		\item Condition 4: there are no vertices $i<1$, so vacuously satisfied.
		\item Condition 5: the subgraph induced by $[1,n]$ is the clique $K_n$ as in the original flower graph.
	\end{itemize}

	{\bf Inductive step.}
	Assume the lemma holds for some $k<n$, i.e., after applying $\Phi_{1\cdots k}$.
	Now consider $\Phi_{1\cdots (k+1)} = \Phi_{k+1} \circ \Phi_{1\cdots k}$.

	\textbf{Color flip:} Vertex $k+1$ flips its color because the majority of its neighbors is $-1$:
	it has $n-k+1$ neighbors in the central clique of color $+1$, one of color $-1$ (vertex $k$)
	and $m \ge n$ neighbors in its petal of color $-1$.

	\textbf{Connections with petals:} By the definition of $\Phi$, vertex $k$ is now connected
	to all vertices of the petal of $k+1$, while the previous connections of vertices $i<k$ remain unchanged.

	\textbf{Path among first $k+1$ vertices:} The subgraph induced by $[k+1]$ forms the path $P_{k+1}$:
	vertices $1$ through $k$ already induce $P_k$ by the inductive hypothesis. Also by the
	inductive hypothesis there is no vertex $<k$ connected to vertices $>k$, which includes $k+1$.
	Therefore, no vertex $i<k$ is connected to $k+1$.

	\textbf{No edges from $[1,k]$ to vertices $>k+1$:}
	The operation $\Phi_{k+1}$ only affects the neighborhood of vertex $k+1$, which consists of $[k,n]$
	together with its petal.
	Since vertex $k$ changed color in the previous step, it becomes disconnected from all vertices $> k+1$,
	while vertices $i<k$ remain disconnected from vertices $> k+1$ by the inductive hypothesis.

	\textbf{Clique among remaining vertices:}
	The subgraph induced by $[k+1,n]$ remains a clique $K_{n-k}$,
	as $\Phi_{k+1}$ does not remove edges among vertices $> k+1$ since they are all of their
	initial color, with the exception of $k+1$. However by definition $\Phi_{k+1}$ does
	not change its neighborhood.\qed
\end{proof}

\subsection{Implementation of Logical Gates via $\Phi$}

In this subsection we construct the gadgets used in the reduction from the \CVP.
We implement logical AND and OR gates, as well as a signal duplicator, within our
bicolored graph framework.
Boolean values are encoded through the presence or absence of specific edges,
and the operation $\Phi$ propagates these values across the construction.

Each gadget consists of a bicolored graph $B$ together with a fixed sequence of
vertices $w$, written as a pair $(B,w)$.
The sequence $w$ specifies the order in which $\Phi$ is applied to the vertices
of the gadget, ensuring that the outputs behave as intended given the inputs.
In other words, $\Phi_w(B)$ evaluates the gadget correctly according to the logical
operation it is meant to simulate.

We start with the OR gadget (see Fig.~\ref{fig:or_and} left),
defined as the pair $(B_\lor, w_\lor)$, where
$B_\lor = (G_\lor, c_\lor)$ is the bicolored graph with
$V(G_\lor) = \{1, \dots, 6\}$,
$E(G_\lor) = \{(1,5), (3,5), (2,6), (4,6)\}$,
and $c_\lor(v) = -1$ for all $v \in V(G_\lor)$.
The fixed sequence of updates is $w_\lor = (1,4,2,3)$.
The inputs are encoded via the potential edges $(1,2)$ and $(3,4)$,
and the output corresponds to the edge $(5,6)$.
The operation of this gadget is illustrated in Appendix A Fig.~\ref{fig:or_functioning}.
Since it is an OR gate, the edge $(5,6)$ is absent if and only if
both edges $(1,2)$ and $(3,4)$ are absent in the initial configuration.

\begin{figure}[t]
	\centering
	\includegraphics[width=3cm]{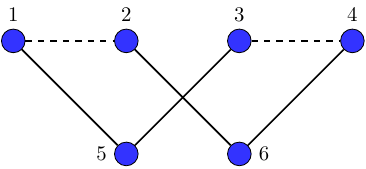}\qquad
	\includegraphics[width=3cm]{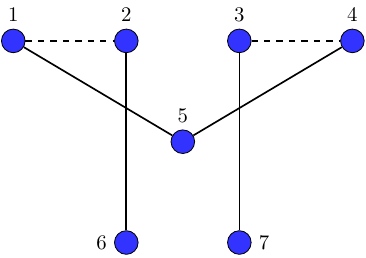}
	\caption{
		OR and AND gadgets. Dashed lines indicate the positions of the inputs.
	}\label{fig:or_and}
\end{figure}

On the other hand, the AND gadget (see Fig.~\ref{fig:or_and} right) is defined
as the pair $(B_\wedge, w_\wedge)$, where
$B_\wedge = (G_\wedge, c_\wedge)$ is the bicolored graph with
$V(G_\wedge) = \{1,\dots,7\}$,
$E(G_\wedge) = \{(1,5), (2,6), (3,7), (4,5)\}$,
and $c_\wedge(v) = -1$ for all $v \in V(G_\wedge)$.
The fixed sequence of updates is $w_\wedge = (2,3,1,4,5)$.
The inputs are encoded via the potential edges $(1,2)$ and $(3,4)$,
and the output corresponds to the edge $(6,7)$.
The operation of this gadget is illustrated in Appendix A Fig.~\ref{fig:and_functioning}.
Since it is an AND gate, the edge $(6,7)$ is present if and only if
both edges $(1,2)$ and $(3,4)$ are present in the initial configuration.

Although we now have OR and AND gadgets, for our reduction we require all
gadgets to have outdegree at most $2$.
As is clear, our current gadgets have a single output,
so we need to introduce an additional gadget capable of duplicating a Boolean value.  
This means the gadget has one input and two outputs, each of which will be
TRUE if and only if the input was initially TRUE.

The duplicator gadget is the most complex gadget in our construction,
at least among the ones we could find.
It consists of several cliques and, unlike the logical gadgets, it has
an additional parameter $k$, which we will discuss later.

Concretely, the gadget contains vertices labeled $1$ through $11$,
with vertex $1$ connected to the center of a flower graph $F_{k,k+2}$.

The duplicator gadget (see Fig.~\ref{fig:dupl}) is defined as the pair $(B_D, w_D)$, where
$B_D = (G_D, c_D)$ is the bicolored graph with
$V(G_D) = \{1,\dots,11\} \cup V(F_{k,k+2})$,
and vertex $1$ is connected to the center of the flower graph $F_{k,k+2}$,
which we assume are labeled from $12$ to $11+k$.
The edges outside the flower are given by:
\[
\begin{aligned}
E(G_D) \setminus E(F_{k,k+2}) = \{ & (1,3), (1,4), (1,6), (2,5), (2,7), \\
& (3,6), (4,8), (4,9), (4,10), (5,11) \}.
\end{aligned}
\]
The coloring of the vertices is defined as
\[
c_D(v) =
\begin{cases}
+1 & \text{for } v = 4,5 \text{ and } v \in [12,11+k], \\
-1 & \text{for all other vertices.}
\end{cases}
\]
The fixed update sequence is $w_D = (1, 2, 12, \dots, 11+k, 3, 1, 4, 5, 4)$.
The single input is encoded via the potential edge $(1,2)$,
and the two outputs correspond to the edges $(6,7)$ and $(10,11)$.
The operation of this gadget is depicted in 
Fig.~\ref{fig:dupl_true_functioning} and Fig.~\ref{fig:dupl_false_functioning}.
Both outputs are TRUE if and only if the input edge is initially present.

Note that upon applying $\Phi_1$, the center of the flower becomes connected to vertex $4$.  
Subsequently, after applying $\Phi_2$, this connection is maintained.
Since vertices $1$ and $4$ are both connected to the center of the flower,
this structure can be interpreted as a flower $F_{k+2,k+2}$.  
Therefore, after applying $\Phi_{(12,\dots,11+k)}$, we obtain exactly the configuration
described in Lemma~\ref{lemma:flower}; that is, the center of the flower has reversed its color
and only vertex $11+k$ is connected to $1$ and $4$.
The remaining transformations can be carried out without difficulty.

The duplicator gadget also satisfies several important properties that are crucial for our construction:  
\begin{enumerate}
    \item Vertices belonging to different outputs are never connected.  
    \item Vertices of the flower are never connected to the outputs.  
    \item The output vertices are all colored $-1$.
\end{enumerate}

\begin{figure}[t]
	\centering
	\includegraphics[width=8cm]{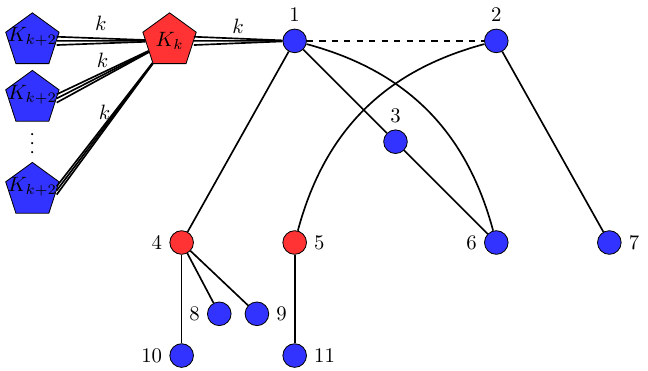}
	\caption{
		Duplicator gadget.
	}\label{fig:dupl}
\end{figure}

\begin{figure}[]
	\centering
	\newcounter{subfig}
	\foreach \i in {0,...,8}{
		\stepcounter{subfig}
		\begin{subfigure}[b]{5.5cm}
			\includegraphics[width=\textwidth]{gadget_duplicator_true_\i.pdf}
			\caption*{\alph{subfig})}
		\end{subfigure}
    }
    \caption{
		{\bf a)} The duplicator with input set to TRUE.
		{\bf b)--i)} Evolution of the gadget under $\Phi_{w_D}$.
		The transition from {\bf c)} to {\bf d)} follows from Lemma~\ref{lemma:flower}.
    }\label{fig:dupl_true_functioning}
\end{figure}

\begin{figure}[]
	\centering
	\setcounter{subfig}{0}
	\foreach \i in {0,...,8}{
		\stepcounter{subfig}
		\begin{subfigure}[b]{5.5cm}
			\includegraphics[width=\textwidth]{gadget_duplicator_false_\i.pdf}
			\caption*{\alph{subfig})}
		\end{subfigure}
    }
    \caption{
		{\bf a)} The duplicator with input set to FALSE.
		{\bf b)--i)} Evolution of the gadget under $\Phi_{w_D}$.
		The transition from {\bf c)} to {\bf d)} follows from Lemma~\ref{lemma:flower}.
    }\label{fig:dupl_false_functioning}
\end{figure}

\newpage
\subsection{Simulating Boolean Circuits}

Given a monotone Boolean circuit, we construct a corresponding network of gadgets as follows.  
For each circuit input, we introduce a duplicator gadget.  
If an input is TRUE, the corresponding input edge of its duplicator is added to the initial configuration.

For each AND and OR gate, we introduce the corresponding logical gadget and connect it to a duplicator.  
Connecting two gadgets means identifying the pair of output vertices of one gadget 
with the predefined input vertices of the next.  

Let the level of a gate be the length of the shortest path from an input to that gate.  
The construction is organized in layers according to the level: 
a layer of duplicators (inputs), followed by a layer of gates, then duplicators, and so on.  
The global update sequence is obtained by concatenating the fixed sequences of all gadgets 
in increasing order of their level.

A key issue is the propagation of edges across layers.  
After a gadget is evaluated, its output vertices may acquire new edges.  
Since these vertices serve as inputs to gadgets in the next layer, such edges accumulate 
towards lower levels.

This is where the parameter $k$ of the duplicator becomes crucial.  
Vertex $1$ is connected to the center of a flower of size $k$.  
To ensure that $\Phi_1$ flips the color of vertex $1$, 
the number of $-1$ colored neighbors potentially attached to it from upper layers 
must be strictly smaller than $k$.

An upper bound on this number is
\[
7 n_\wedge + 6 n_\lor + 9 (n_\wedge + n_\lor),
\]
where $n_\wedge$ and $n_\lor$ denote the number of AND and OR gates, respectively.  
Choosing $k$ strictly larger than this quantity guarantees that the accumulated influence 
from upper layers cannot prevent the intended color change of vertex $1$.

\begin{theorem}
Let $C$ be a monotone Boolean circuit and let $(B_C, w_C)$
be the bicolored graph and update sequence obtained by the above construction.
For every assignment to the input gates of $C$,
the designated output edge of $B_C$ is present after applying $w_C$
if and only if the output gate of $C$ evaluates to TRUE.
\end{theorem}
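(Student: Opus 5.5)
The proof is by induction on the layers of the construction, in the order in which the global sequence $w_C$ processes them. The invariant $\mathcal{I}_\ell$ for layer $\ell$ is this: just before the fixed sequence of any gadget in layer $\ell$ is applied, its input vertices carry color $-1$, and each input edge is present if and only if the corresponding wire of $C$ carries TRUE. The base case is the first layer of duplicators. By construction their input edges $(1,2)$ are present exactly for the TRUE inputs, and all their vertices other than $4$, $5$ and the flower center have their initial colors.

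For the inductive step, take a gadget in layer $\ell$ satisfying $\mathcal{I}_\ell$ and show that after its sequence is applied, its output edges encode the correct Boolean function of its inputs and its output vertices have color $-1$. This gives $\mathcal{I}_{\ell+1}$ for the gadgets whose inputs are identified with these outputs.

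\begin{itemize}
\item For the OR and AND gadgets, replay the isolated evaluations in Figs.~\ref{fig:or_functioning} and~\ref{fig:and_functioning}.
\item For the duplicator, replay Figs.~\ref{fig:dupl_true_functioning} and~\ref{fig:dupl_false_functioning}. The flower phase $\Phi_{12\cdots 11+k}$ is justified by Lemma~\ref{lemma:flower} applied to the flower that, as explained above, effectively becomes $F_{k+2,k+2}$.
\end{itemize}

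The embedded gadget differs from the isolated one only through extra edges at its input vertices, left there by earlier layers (accumulated neighbors). It has no extra edges elsewhere. Vertices of later layers have not been touched yet, and an operation $\Phi_v$ only alters edges inside $N(v)$. So each step of a gadget's sequence must be checked to produce the same color decision and the same new adjacencies among the relevant vertices as in isolation.

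The extra neighbors never create a spurious output edge. The output edge is created only when two output vertices are common neighbors of an updated vertex with equal color. The gadget properties that the two outputs of a duplicator are never joined, and that flower vertices never reach outputs, keep accumulated neighbors away from output pairs.

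The main obstacle is the color decisions of vertices whose neighborhoods include accumulated vertices, chiefly vertex $1$ of each duplicator and the input vertices of the OR/AND gadgets. My first attempt is as follows.

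\begin{itemize}
\item For a duplicator's vertex $1$, bound the number of accumulated $-1$ neighbors by $7n_\wedge+6n_\lor+9(n_\wedge+n_\lor)$, summing over all gadget vertices that could ever be attached. Since $k$ exceeds this, the $+1$ flower center dominates and $\Phi_1$ flips vertex $1$ as intended.
\item For the OR/AND gadgets, the update sequences apply $\Phi$ to input vertices whose relevant neighbors all have color $-1$. I plan to argue that accumulated neighbors are also $-1$, because outputs end at $-1$ and untouched vertices are $-1$. In that case the majority sign is unchanged, the vertices stay $-1$, and only edge rewiring happens, which merely joins same-colored $-1$ neighbors together.
\end{itemize}

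I would also check that such extra edges among already-processed vertices cannot feed back, since those vertices are never updated again. Finally, the output gate's gadget produces the designated output edge, which gives the statement.
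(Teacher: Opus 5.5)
Your proposal follows the paper's proof: induction over the layers in the order of $w_C$, reuse of the isolated gadget evaluations, and the choice of $k$ so that accumulated $-1$ neighbors cannot stop the first $\Phi_1$ of a duplicator from flipping vertex $1$. Your extra checks on colors and spurious edges go beyond what the paper verifies, but one justification is wrong as stated: accumulated vertices do reach output vertices (the first $\Phi_1$ of a duplicator joins $6$ to every accumulated neighbor of $1$), so what actually rules out spurious output edges is that these vertices are never updated again, which means only the gadget's own updates can join an output pair.
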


\begin{proof}
We prove the statement by induction on the level of the circuit.

For level $0$ (input gates), each input gate is represented by a duplicator gadget.
By construction, the presence or absence of its designated input edge
encodes the Boolean value assigned to that gate.
By correctness of the duplicator gadget,
its output edges are present if and only if the input edge is present.
Hence the duplicator correctly encodes the value of each input gate.

Assume now that for every gate at level at most $\ell$,
the corresponding gadget in $B_C$ produces an output edge
that is present if and only if the gate evaluates to TRUE.
Let $g$ be a gate at level $\ell+1$.
By construction, the input vertices of the gadget corresponding to $g$
are identified with the output vertices of gadgets at level $\ell$.
By the induction hypothesis,
these vertices encode exactly the Boolean values
of the predecessors of $g$ in the circuit.

Since the AND and OR gadgets were shown to compute
the correct Boolean function on their designated input edges,
the output edge of the gadget corresponding to $g$
is present after applying its update sequence
if and only if the gate $g$ evaluates to TRUE.

It remains to verify that propagated edges from upper levels
do not alter the intended behavior of duplicators.
By the choice of $k$, the number of additional $-1$ colored neighbors
that may accumulate at any duplicator
is strictly smaller than the size of the flower center.
Therefore, the application of $\Phi_1$
produces the same color change as in the isolated analysis of the duplicator,
and its outputs depend only on its designated input.

Finally, since $w_C$ is defined as the concatenation of the update sequences
of all gadgets in increasing order of their level,
every gadget is evaluated only after all gadgets at smaller levels
have been evaluated.
Hence the above argument applies to all levels.

Therefore, the designated output edge of $B_C$
is present after applying $w_C$
if and only if the output gate of $C$ evaluates to TRUE.\qed
\end{proof}

\subsection{The Construction is Computable in $\LS$}

The parameter $k$ can be computed in logarithmic space by counting the number of AND and OR gates and applying
\(
k > 7 n_\wedge + 6 n_\lor + 9(n_\wedge + n_\lor),
\)
using $O(\log n)$ bits for the counters.

The vertex set of $G_C$ consists of gadgets associated with each gate, as well as duplicators gadgets.
Although the duplicator gadget depends on a parameter $k = O(n)$, each vertex can be identified by a pair (gate index, local gadget index), which requires
$O(\log n)$ space.
Edges are generated on-the-fly by iterating over gates and their connections in the input circuit.
For each output edge, the corresponding input vertex of the next gadget is reconstructed
from its gate index and local index, so there is no need to store entire layers of outputs.

The coloring $c_C$ is determined locally by gadget type and local index,
and is written simultaneously when generating vertices.
The update sequence $w_C$ is generated by iterating over gates in topological order and writing
each gadget's fixed sequence in order.
Since the construction of each gadget only requires $O(\log n)$ space for indices and counters,
the entire construction $(B_C, w_C)$ can be computed in logarithmic space.
This concludes Theorem~\ref{thm:main}.

\section{Conclusions}
In this work, we introduced a local graph transformation, which we call local homophily,
inspired by adaptive networks and based on the principles of majority dynamics and homophily.
We defined the class of flower graphs and proved a key structural lemma that underlies the
functioning of the duplication gadget. We then presented the construction of the gadgets,
explained how to assemble them into circuits, and proved the correctness of the overall construction.
Finally, we discussed how the construction can be computed in logspace, leading to our main
theorem establishing the $\Poly$-completeness of the \LHE.

Several directions for future research emerge from this work. One natural question is to
further study the dynamics of the local homophily transformation itself, for example by
characterizing the set of graphs reachable from a given class of initial graphs.
Potential applications could include consensus or other complex processes on adaptive networks.
It would also be interesting to analyze the computational complexity of other local transformations,
or to consider complexity questions on local homophily itself, such as reachability:
``given two bicolored graphs, can one be reached from the other via local homophily?''
Finally, one could explore richer algebraic structures associated with the transformation,
which may provide additional insights into its behavior and computational properties.

%% file: appendix.tex
\appendix
\section{Full Behavior of the Gates}

This appendix provides figures illustrating the full behavior of the
OR and AND gadgets.

\begin{figure}[t]
	\centering
	\setcounter{subfig}{0}
	\foreach \i in {0,...,4}{
		\stepcounter{subfig}
		\begin{subfigure}[b]{3cm}
			\includegraphics[width=\textwidth]{gadget_or_true_false_\i.pdf}
			\caption*{\alph{subfig})}
		\end{subfigure}
    }\\
	\foreach \i in {0,...,4}{
		\stepcounter{subfig}
		\begin{subfigure}[b]{3cm}
			\includegraphics[width=\textwidth]{gadget_or_true_true_\i.pdf}
			\caption*{\alph{subfig})}
		\end{subfigure}
    }
    \caption{
		Behavior of the OR gadget:  
		{\bf a)} OR gadget with one input TRUE and one input FALSE.  
		{\bf b)--e)} Evolution according to $w_\lor$.  
		{\bf f)} OR gadget with both inputs TRUE.  
		{\bf g)--j)} Evolution according to $w_\lor$.  
		The case with both inputs FALSE is trivial.  
		The case with inputs FALSE and TRUE is symmetrical to TRUE and FALSE.
    }\label{fig:or_functioning}
\end{figure}

\begin{figure}[]
	\centering
	\setcounter{subfig}{0}
	\foreach \i in {0,...,5}{
		\stepcounter{subfig}
		\begin{subfigure}[b]{3cm}
			\includegraphics[width=\textwidth]{gadget_and_true_false_\i.pdf}
			\caption*{\alph{subfig})}
		\end{subfigure}
    }\\
	\foreach \i in {0,...,5}{
		\stepcounter{subfig}
		\begin{subfigure}[b]{3cm}
			\includegraphics[width=\textwidth]{gadget_and_true_true_\i.pdf}
			\caption*{\alph{subfig})}
		\end{subfigure}
    }
    \caption{
		Behavior of the AND gadget:  
		{\bf a)} AND gadget with one input TRUE and one input FALSE.  
		{\bf b)--f)} Evolution according to $w_\wedge$.  
		{\bf g)} AND gadget with both inputs TRUE.  
		{\bf h)--l)} Evolution according to $w_\wedge$.  
		The case with both inputs FALSE is trivial.  
		The case with inputs FALSE and TRUE is symmetrical to TRUE and FALSE.
    }\label{fig:and_functioning}
\end{figure}